\documentclass[12pt]{article}
\usepackage{amsmath}
\usepackage{amsthm}
\usepackage[normalem]{ulem}

\usepackage{amssymb}
\usepackage{amsmath}
\usepackage{bbm}
\usepackage{multirow}
\usepackage{latexsym}
\usepackage{epsfig}
\usepackage{graphicx}
\usepackage{epstopdf}
\usepackage[usenames]{color}
\usepackage{soul}

\def\boxit#1{\vbox{\hrule\hbox{\vrule\kern6pt
          \vbox{\kern6pt#1\kern6pt}\kern6pt\vrule}\hrule}}

\usepackage{natbib}

\newcommand{\bfb}{{\bf b}}

\newcommand{\bfp}{{\bf p}}

\newcommand{\bfy}{{\bf y}}

\newcommand{\bfB}{{\bf B}}

\newcommand{\bfD}{{\bf D}}

\newcommand{\bfG}{{\bf G}}
\newcommand{\bfH}{{\bf H}}
\newcommand{\bfI}{{\bf I}}

\newcommand{\bfL}{{\bf L}}
\newcommand{\bfM}{{\bf M}}

\newcommand{\bfP}{{\bf P}}
\newcommand{\bfQ}{{\bf Q}}
\newcommand{\bfR}{{\bf R}}

\newcommand{\bfV}{{\bf V}}
\newcommand{\bfW}{{\bf W}}
\newcommand{\bfX}{{\bf X}}
\newcommand{\bfY}{{\bf Y}}
\newcommand{\bfZ}{{\bf Z}}

\newcommand{\ysp}{{\widetilde{\bfy}}}
\newcommand{\xsp}{{\widetilde{\bfX}}}

\newcommand{\bfbeta}{\mbox{\boldmath $\beta$}}
\newcommand{\bfepsilon}{\mbox{\boldmath $\epsilon$}}

\newcommand{\bfeta}{\mbox{\boldmath $\eta$}}
\newcommand{\bftheta}{\mbox{\boldmath $\theta$}}

\newcommand{\bfphi}{\mbox{\boldmath $\phi$}}
\newcommand{\bfxi}{\mbox{\boldmath $\xi$}}

\newcommand{\bfSigma}{\mbox{\boldmath $\Sigma$}}

\newcommand{\bfOmega}{\mbox{\boldmath $\Omega$}}

\newcommand{\bfzeta}{\mbox{\boldmath $\zeta$}}

\newcommand{\diag}{\mathrm{diag}}

\newtheorem{theorem}{Theorem}[section]

\newcommand{\bfone}{{\bf 1}}
\newcommand{\bfzero}{{\bf 0}}

\begin{document}
\thispagestyle{empty} \baselineskip=28pt

\begin{center}
{\LARGE{\bf A novel reference prior for Gaussian hierarchical models with intrinsic conditional autoregressive random effects}}
\end{center}

\baselineskip=12pt

\vskip 2mm
\begin{center}
Marco A. R. Ferreira\footnote{\baselineskip=10pt Department of Statistics, Virginia Tech, Blacksburg, VA, USA, marf@vt.edu}
\end{center}
%
%
%
%

\begin{center}
{\large{\bf Abstract}}
\end{center}
\baselineskip=12pt

We develop a novel reference prior for Gaussian hierarchical models with intrinsic conditional autoregressive (ICAR) random effects. 
This is particularly important in the context of objective Bayes variable selection with sample size $n$ and $k$ regressors.
In this context, a previously published reference prior requires the computation of spectral decompositions of two $n$-dimensional matrices for each model under consideration. 
As a consequence, for variable selection the computational cost of this previous reference prior grows as $O(n^3 2^k)$. 
In contrast, our novel reference prior requires the computation of the spectral decomposition of one $n$-dimensional matrix that can be used for all models under consideration. 
Thus, the computational cost of our novel reference prior grows much slower as $O(n^3)$. 
Hence, computational savings can be substantial, e.g. in a problem with 10 regressors, when compared to the previously published reference prior,  computations based on our novel reference prior are more than 1000 times faster.
We provide a proof of the equivalence of the two priors. 
A simulation study shows that, while both reference priors provide equivalent variable selection results, for large sample sizes computations based on our novel prior are several orders of magnitude faster. 
Finally, the utility of our novel reference prior is illustrated with a spatial regression study of county-level median household income on socio-economic regressors for 3108 counties in the contiguous United States.
%
%
%

\baselineskip=12pt
\par\vfill\noindent
{\bf Keywords:} Areal data; ICAR random effects; Markov random fields; Singular Gaussian distribution; Spatial statistics.

\par\medskip\noindent

\clearpage\pagebreak\newpage \pagenumbering{arabic}
\baselineskip=24pt

\section{Introduction}

Hierarchical models with conditional autoregressive (CAR)  \citep{besag1974} and intrinsic conditional autoregressive (ICAR) \citep{besag1991} random effects have been applied in a wide variety of fields such as ecology \citep{VerHoef2018}, data on network graphs \citep{kolaczyk2020statistical}, complex survey data \citep{mercer2015}, disease mapping \citep{reich2006, Jin2007, Lee2011, Goicoa2018}, and neuroscience \citep{liu2016, lyu2024spatial}. For situations when little or no prior information about the parameters is available, \citet{keef:ferr:fran:2019} developed a reference prior for Gaussian hierarchical models with ICAR random effects. Henceforth, we refer to this as the KFF prior. 
When compared to widely used gamma priors \citep{best:etal:1999,lee2013}, Bayesian analysis based on the KFF prior performs favorably in terms of mean squared estimation error and credible intervals with frequentist coverage close to nominal  \citep{keef:ferr:fran:2019,Ferreira2021}. Also based on the KFF prior, \citet{Porter2024} developed an objective Bayes model selection approach for Gaussian hierarchical models with ICAR random effects. 
Compared to three widely used criteria \citep{wata:2010,spiegelhalter2002, cele:etal:2006}, the model selection approach of \citet{Porter2024} provides much better performance in terms of identifying the correct spatial structure and regressors. 
Despite its excellent statistical properties, the KFF prior suffers from a serious drawback: for model selection problems with sample size $n$ and $k$ regressors, the KFF prior computational cost increases fast as $O(n^3 2^k)$.

To address this drawback, here we propose a new reference prior that is much faster to compute than the KFF prior. While the KFF prior was based on a theorem provided by \citet{deoliveira2007}, our novel prior is based on a theorem provided by \citet{berg:oliv:sans:2001}. 
The KFF prior requires the computation of spectral decompositions of two $n$-dimensional matrices for each model under consideration, leading to the $O(n^3 2^k)$ growth in computational cost. 
In contrast, our novel reference prior requires the computation of the spectral decomposition of one $n$-dimensional matrix that can be used for all models under consideration. 
Thus, the computational cost of our novel reference prior grows much slower as $O(n^3)$. 
Finally, we provide a theorem that states the mathematical equivalence of the KFF prior and our novel prior. Therefore, our novel prior inherits all the excellent statistical properties of the KFF prior, and leads to much faster computations.

In addition to the novel reference prior, we propose to accelerate computations using spectral computations. Specifically, 
\citet{Ferreira2021} has proposed to accelerate computations for Gaussian hierarchical models with ICAR spatial random effects by transforming the hierarchical model from the spatial domain
to the spectral domain. 
Here, we extend the proposal by \citet{Ferreira2021} to accelerate computations for objective Bayes model selection. 
In this context, in contrast to the usual $O(n^3)$ computations of determinants and matrix inverses, computations in the spectral domain cost $O(n)$ operations.

To compare the performance of the KFF prior and our novel prior in objective Bayes model selection computations, we perform a simulation study to investigate how the computational costs of the two priors increase with the sample size. 
This simulation study shows that, while both reference priors provide equivalent variable selection results, the computational cost of the KFF prior increases much faster  with sample size $n$. As a consequence, 
for large sample sizes computations based on our novel prior are several orders of magnitude faster. 

Finally, we illustrate the use of our novel reference prior with an analysis of median household income per county in the contiguous United States. 
Specifically, we consider 11 candidate socio-economic regressors and compute the posterior model probabilities of the $2^{11}=2048$ possible models. 
Based on these posterior model probabilities, we report the posterior inclusion probabilities of each regressor.  
With 3108 counties and 11 candidate regressors, computations based on the KFF prior in a standard laptop as implemented in the R package ref.ICAR version 2.0.2 \citep{port:etal:2025} would take several months. 
In contrast, computations based on our novel reference prior  take 27.3 minutes. 

The remainder of the article is organized as follows. 
Section~\ref{sec:model} introduces notation and reviews hierarchical models with ICAR random effects, the KFF  prior, and the objective Bayes model selection approach proposed by \citet{Porter2024}.
Section~\ref{sec:spectral-domain-computations} presents fast and scalable spectral domain computations for objective Bayes model selection. 
Section~\ref{sec:new-reference-prior} presents the novel reference prior and a theorem that establishes its equivalence to the KFF prior. 
Section~\ref{sec:computational-time} presents a simulation study to compare the computational cost of the KFF prior and the novel reference prior. 
Section~\ref{sec:application} illustrates the utility of our novel reference prior for model selection for large spatial datasets with a spatial regression of the logarithm of median household income on socio-economic variables per county in the contiguous United States. 
Finally, Section~\ref{sec:discussion} provides a brief discussion about our main findings and possible future research directions. 
For convenience of exposition, all proofs appear in the Appendix.


\section{Hierarchical models with ICAR random effects}\label{sec:model}

\subsection{The hierarchical model}
\label{sec:modelsub}

We focus on spatial areal data observed across a geographical region partitioned into $n$ disjoint subregions, indexed $1, \ldots, n$. Adopting the notation of \citet{keef:ferr:fran:2018}, we assume a neighborhood structure where $N_j$ denotes the set of neighbors of subregion $j$, for $j=1,\ldots,n$.
Specifically, we consider the model 
\begin{equation}
\label{eqn:modelequation}
\bfy = \bfX \bfbeta + \bfphi + \bfepsilon, 
\end{equation}
where $\bfy$ is an $n$-dimensional vector of observed responses, $\bfX$ is an $n \times p$ matrix of regressors, and $\bfbeta$ is a $p$-dimensional vector of regression coefficients. The matrix $\bfX$ has a column of ones denoted by $\bfone_n$, and a column for each of $k=p-1$ candidate regressors. Each row of $\bfX$ corresponds to one subregion. In addition, $\bfepsilon=(\epsilon_1, \ldots, \epsilon_n)'$ is an $n$-dimensional vector of errors assumed to be independent and identically distributed $N(0,\sigma^2)$. 

The vector $\bfphi=(\phi_1, \phi_2, \dots, \phi_n)'$ contains spatial random effects assumed to be independent of $\bfepsilon$. Specifically, $\bfphi$ follows the sum-zero constrained ICAR specification proposed by \citet{keef:ferr:fran:2018,keef:ferr:fran:2019}. That is, $\bfphi\sim N(\bfzero_n,\tau^{-1}\sigma^2\bfH^+)$, where $\bfH^+$ is the Moore-Penrose inverse of a matrix $\bfH$ that codes the neighborhood structure of the regions under study. The matrix $\bfH$ has elements 
$(H)_{ij} = h_i$ if  $i=j$, 
$(H)_{ij} = -g_{ij}$ if $i \in N_j$, and
$(H)_{ij} = 0$ otherwise, 
where $g_{ij}=g_{ji}\ge 0$ is a measure of similarity between subregions $i$ and $j$. 
Further, $h_i=\sum_{j \neq i}g_{ij}$ is the sum of the off-diagonal elements in row $i$. 
For example, a widely used measure of similarity assigns $g_{ij}=1$ if subregions $i$ and $j$ share a border, and $g_{ij}=0$ otherwise. In this case, $h_i$ is the number of neighbors of subregion $i$. 

This definition implies that $\bfH$ is symmetric and positive semidefinite. Specifically, $n^{-1/2}\boldsymbol{1}$ is a normalized eigenvector of $\bfH$ that corresponds to a null eigenvalue \citep{ferr:deol:2007,deol:2011},  thus $\bfH$ is singular.
We further assume that any two subregions are connected by a path, thus there are no islands and, consequently, the null eigenvalue of $\bfH$ has multiplicity one.  
Because the rank of $\bfH^+$ is $n-1$ with $n^{-1/2}\bfone$ being the normalized eigenvector of $\bfH^+$ corresponding to its null eigenvalue, the sum-zero constrained ICAR distribution $\bfphi \sim N(\bfzero,\tau^{-1} \sigma^2 \bfH^+)$ implicitly encodes the constraint $\bfone' \bfphi = 0.$ 
Then, $\bfphi$ has density \citep{keef:ferr:fran:2018,keef:ferr:fran:2019}
\begin{equation} \label{eqn:ICARdensity}
p(\bfphi) = (2 \pi \sigma^2)^{-(n-1)/2} \tau^{(n-1)/2} \left(\prod_{i=1}^{n-1} d_i \right)^{1/2} \exp\left\{ -\frac{\tau}{2\sigma^2} \bfphi' \bfH \bfphi  \right\} \mathbbm{1}(\bfone' \bfphi = 0),
\end{equation}
where $\tau >0$ is a noise-to-signal ratio parameter and $d_1 \ge \cdots \ge d_{n-1} > d_n=0$ are the ordered eigenvalues of $\bfH$. Importantly, the sum-zero constraint explicitly appears in the term $\mathbbm{1}(\bfone' \bfphi = 0)$ of the density in Equation~(\ref{eqn:ICARdensity}).

There is a fundamental conceptual difference between the sum-zero constrained ICAR model given in Equation~(\ref{eqn:ICARdensity}) ---henceforth referred to as KFF ICAR---  and the improper ICAR model considered by \citet{besag1991} ---henceforth referred to as BYM ICAR. Similarly to the KFF ICAR, the BYM ICAR has density proportional to $\exp\left\{ -\tau \bfphi' \bfH \bfphi/(2\sigma^2)  \right\}$, but without a well-defined constant of proportionality and without an explicit mathematical specification of the sum-zero constraint.
Specifically, the integral of the BYM ICAR density with respect to $\bfphi$ is divergent. In contrast, the integral of the KFF ICAR density is finite and equals one. As a practical consequence, while implementation of Gaussian hierarchical models with BYM ICAR random effects requires sum-zero constraints to be imposed either computationally or through conditioning, implementation of Gaussian hierarchical models with KFF ICAR random effects automatically take into account the sum-zero constraint.

This fundamental conceptual difference between the KFF ICAR and the BYM ICAR has important consequences for the development of statistical methodology. 
For example, to develop a reference prior for the parameters of Gaussian hierarchical models with ICAR random effects, \citet{keef:ferr:fran:2019} had to analytically integrate out the random effects before applying the reference prior methodology. 
It is not clear at all how to perform such analytical integration with BYM ICAR random effects.
Thus, to perform this analytical integration, \citet{keef:ferr:fran:2019} used the KFF ICAR specification. 
As another example, the development of objective Bayes model selection for Gaussian hierarchical models with ICAR random effects requires knowledge of the constant of proportionality. 
Unfortunately, such constant of proportionality is not well-defined in the BYM ICAR specification. 
Thus, \citet{Porter2024} used KFF ICAR random effects to develop objective Bayes model selection for Gaussian hierarchical models. 
Therefore, the KFF ICAR specification has allowed important recent methodological developments. 

Despite the conceptual difference, recent results have shown the equivalence of posterior analyses for Gaussian hierarchical models 
based either on KFF ICAR random effects, or on BYM ICAR random effects with the sum-zero constraint being imposed either computationally or with conditioning.
First, when simulating the random effects without conditioning on data, the sum-zero constrained KFF ICAR is the limiting distribution of a one-at-a-time Gibbs sampler for the BYM ICAR with centering on the fly applied to the simulated $\bfphi$ values at the end of each Gibbs iteration \citep{ferr:2019}. 
In addition, posterior analysis for the Gaussian hierarchical model with sum-zero constrained KFF ICAR random effects is equivalent to using improper BYM ICAR random effects with either centering on the fly \citep{Ferreira2021} or using multivariate Gaussian results to condition on the sum-zero constraint \citep{Porter2024}. 
Therefore, for Gaussian hierarchical models, the three approaches to enforcing the sum-zero constraint for the spatial random effects are equivalent.

\subsection{KFF prior for the model parameters}\label{sec:existing-reference-prior}

Bayesian analysis requires specifying a joint prior for $\bfbeta$, $\sigma^2$, and $\tau$. While informative priors should be utilized whenever valid prior knowledge is available, in an attempt to reflect little or no prior information researchers typically rely on vague priors for $\bfbeta$ and $\sigma^2$. Additionally, even among experts in Bayesian methods, accurately specifying an informative prior for the spatial parameter $\tau$ remains a widespread practical difficulty.

For cases when little or no prior information is available, 
\citet{keef:ferr:fran:2019} developed a reference prior for the parameters 
$(\bfbeta,\sigma^2,\tau)$ of the hierarchical model given by Equation~(\ref{eqn:modelequation}) with ICAR spatial random effects with density (\ref{eqn:ICARdensity}). 
Define $\bfG = \bfI_n - \bfX({\bfX}' \bfX)^{-1} {\bfX}'$ as the orthogonal projection matrix onto the orthogonal complement of the column space of $\bfX$ in $\mathbb{R}^n$. 
In addition, consider the spectral decomposition $\bfG = \bfM\bfL\bfM'$, where $\bfL$ is a diagonal matrix containing the eigenvalues of $\bfG$ sorted in descending order, and the columns of $\bfM$ are the corresponding eigenvectors.
Further, let $\bfM^*$ denote the $n \times (n-p)$ matrix comprising the columns of $\bfM$ that correspond to the nonzero eigenvalues of $\bfG$.
Furthermore, denote by $\lambda_1 \geq \cdots \geq \lambda_{n-p}$ the ordered eigenvalues of ${\bfM^*}'\bfH^+\bfM^*$. 
Then, the reference prior for $(\bfbeta,\sigma^2,\tau)$ developed by \citet{keef:ferr:fran:2019} is
\begin{equation} \label{eqn:referenceprior}
\pi(\alpha,\bfbeta,\sigma^2,\tau) \propto \frac{\pi(\tau)}{\sigma^2},
\end{equation}
where 
\begin{equation} \label{eqn:referenceprior-tau}
\pi(\tau) \propto \frac{1}{\tau} \left[\sum\limits_{j=1}^{n-p}\left(\frac{\lambda_j}{\tau+ \lambda_j}\right)^2 - \frac{1}{n-p}\left \{\sum\limits_{j=1}^{n-p}\left(\frac{\lambda_j}{\tau+ \lambda_j}\right)\right \}^2\right]^{1/2}
\end{equation}
is a marginal density for $\tau$ that is proper \citep{keef:ferr:fran:2019}. In addition, the reference prior density for $(\bfbeta,\sigma^2)$ is $\pi(\bfbeta,\sigma^2)\propto\sigma^{-2}$, which coincides with the reference prior density for the parameters of a linear regression model with independent and identically distributed normal errors. 

The reference prior for $\tau$ provided in Equation~(\ref{eqn:referenceprior-tau}) is of tremendous practical importance for cases when no prior information is available about $\tau$. 
Importantly, \citet{keef:ferr:fran:2019} and \citet{Ferreira2021} have shown that, when compared to widely used gamma priors \citep{best:etal:1999,lee2013}, Bayesian analysis based on the reference prior performs favorably in terms of mean squared estimation error and credible intervals with frequentist coverage close to nominal. These favorable frequentist properties are essential for techniques intended for widespread, automated application. With this in mind, the reference prior developed by \citet{keef:ferr:fran:2019} has been implemented in the ref.ICAR package \citep{port:etal:2025}. This software is freely accessible  via the Comprehensive R Archive Network (CRAN, https://cran.r-project.org/) for the R statistical computing environment \citep{r:2025}.

However, the computation of the marginal prior density $\pi(\tau)$ in (\ref{eqn:referenceprior-tau}) requires the computation of spectral decompositions of two matrices, $\bfG$ and ${\bfM^*}'\bfH^+\bfM^*$, at a cost proportional to $2n^3$ operations. For MCMC implementations, the corresponding eigenvalues may be computed before the start of the MCMC iterations. However, for variable selection these spectral decomposition computations have to be performed for each of the $2^k=2^{p-1}$ possible combinations of candidate regressors. Therefore, the cost of computing the reference prior for all possible competing models is proportional to $ 2^p n^3$. 

In Section~\ref{sec:new-reference-prior}, we propose a novel reference prior that accelerates computations for objective Bayes variable selection. 

\subsection{Objective Bayes model selection}

\citet{Porter2024} have developed an objective Bayes model selection approach for hierarchical models with ICAR random effects. 
Their approach chooses among the candidate regressors and also decides on whether or not spatial random effects should be included in the model. 
Because the reference prior given in Equation (\ref{eqn:referenceprior}) is improper (for $\bfbeta$ and $\sigma^2$), \citet{Porter2024} developed a model selection approach based on fractional Bayes factors (FBF) \citep{ohagan1995}. \citet{Porter2024} compared this FBF approach versus the widely applicable information criterion (WAIC) \citep{wata:2010}, the deviance information criterion (DIC) \citep{spiegelhalter2002}, and the type II DIC \citep{cele:etal:2006}. Compared to these three competing approaches, the FBF approach provides much better performance in terms of both identifying the correct set of regressors and the correct spatial dependence structure.

First, \citet{Porter2024} assign equal prior probability $1/2$ for spatial dependence and for independence. In addition, \citet{Porter2024} divide the prior probabilities as suggested by \cite{scott2010} so that the full set of candidate models contains all possible combinations of regressors under both spatial dependence and independence. Consider a model $M_c$ (with or without spatial dependence) with $k_c=p_c-1$ regressors in an $n\times p_c$ regression matrix $\bfX_c$. 
In addition, recall $k=p-1$ is the total number of candidate regressors.
Then, \citet{Porter2024} assign for $M_c$ the prior probability
\begin{equation} \label{eq:model_priors}
    P(M_c) = \frac{1}{2(k+1)} \binom{k}{k_c}^{-1}.
\end{equation}

The FBF approach trains the reference prior with a fraction $b=m/n$ of the likelihood function, where \citet{Porter2024} set $m$ to be the minimum training sample size $m=p+1$. 
Then, the fractional integrated likelihood under model $M_c$ is 
\begin{align} 
    q_c(b,\bfy) & = \int \frac{\pi(\boldsymbol{\eta}_c)p^b(\bfy|\boldsymbol{\eta}_c,M_c)\{p^{1-b}(\bfy|\boldsymbol{\eta}_c,M_c)\}d\boldsymbol{\eta}_c}{\int p^b(\bfy|\boldsymbol{\eta}_c,M_c)\pi(\boldsymbol{\eta}_c)d\boldsymbol{\eta}_c} = \frac{\int p(\bfy|\boldsymbol{\eta}_c,M_c)\pi(\boldsymbol{\eta}_c)d\boldsymbol{\eta}_c}{\int p^b(\bfy|\boldsymbol{\eta}_c,M_c)\pi(\boldsymbol{\eta}_c)d\boldsymbol{\eta}_c}, \label{eq:12_int_likelihood_fbf}
\end{align}
where the integral in the denominator of Equation (\ref{eq:12_int_likelihood_fbf}) is
\begin{align} \label{eq:13_int_likelihood_ref}
   \int p^b(\bfy|\boldsymbol{\eta}_{c},M_{c})\pi(\boldsymbol{\eta}_{c})d\boldsymbol{\eta}_{c} \propto \int_0^{\infty} & |\bfOmega|^{-\frac{b}{2}}|\bfX_{c}' \bfOmega^{-1} \bfX_{c}|^{-\frac{1}{2}} \Big[\frac{b}{2}S_{c}^2\Big]^{\frac{p_{c}-nb}{2}} \pi(\tau) d\tau, 
\end{align}
\noindent with $\bfOmega = \bfI_n + \tau^{-1} \bfH^+$ and $S_{c}^2=\bfy'(\bfOmega^{-1}-\bfOmega^{-1}\bfX_{c}(\bfX_{c}' \bfOmega^{-1}\bfX_{c})^{-1}\bfX_{c}'\bfOmega^{-1})\bfy$. This is a one-dimensional integral that may be accurately approximated using adaptive quadrature integration. 
To increase the speed and accuracy of the adaptive quadrature integration, we reparameterize from $\tau$ to $\psi=\log(\tau)$, with the prior density of $\psi$ equal to $\pi_\psi(\psi) = e^\psi \pi(\tau)$. Then, the integrand becomes $h(\psi)=|\bfOmega|^{-\frac{b}{2}}|\bfX_{c}' \bfOmega^{-1} \bfX_{c}|^{-1/2} (0.5 b S_{c}^2)^{(p_{c}-nb)/2} \pi_\psi(\psi)$, with limits of integration $-\infty$ and $\infty$. 
While direct computations of $|\bfOmega|$, $|\bfX_{c}' \bfOmega^{-1} \bfX_{c}|$, and $S_{c}^2$ scale as $O(n^3)$, we propose in Section~\ref{sec:spectral-domain-computations} to compute these quantities with spectral computations that scale as $O(n)$.

The fractional Bayes factor of model $M_c$ versus model $M_a$ is $BF_{ca}^b = q_c(b,\bfy)/q_a(b,\bfy)$. 
In addition, posterior probabilities of the several competing models can be computed from the Bayes factors \citep{Liang2008,Porter2024}. 
Finally, these posterior model probabilities can be used to compute posterior inclusion probabilities for the candidate regressors \citep{Barbieri2004}.

\section{Spectral domain computations for objective Bayes model selection}\label{sec:spectral-domain-computations}

\citet{Ferreira2021} has proposed to accelerate computations for Gaussian hierarchical models with ICAR spatial random effects by transforming the hierarchical model from the spatial domain
to the spectral domain. 
This section adapts the proposal by \citet{Ferreira2021} to objective Bayes model selection. 

Let $\bfH = \bfP\bfD\bfP'$ be the spectral decomposition of $\bfH$, where the columns of $\bfP=(\bfp_1,\bfp_2,\dots,\bfp_n)$ are the normalized eigenvectors of $\bfH$. 
In addition, the diagonal matrix $\bfD = \diag(d_1, d_2, \dots , d_n)$ contains the ordered eigenvalues of $\bfH$ denoted by $d_1 \ge d_2 \ge \cdots \ge d_{n-1} > d_n = 0$.  
Denote the Moore-Penrose inverse of $\bfD$ by $\bfD^+=\diag(d_1^{-1},\ldots,d_{n-1}^{-1},0)$.
Now, premultiply Equation~(\ref{eqn:modelequation}) by the transpose of the eigenvectors matrix $\bfP$. 
Denote the spectrally transformed quantities by $\ysp = \bfP'\bfy,$ $\xsp = \bfP'\bfX,$ $\bfxi = \bfP'\bfphi,$ and $\bfzeta=\bfP'\bfepsilon$. 
Because $Cov(\bfzeta) = Cov(\bfP'\bfepsilon) = \sigma^2\bfP'\bfP = \sigma^2\bfI$, the unstructured errors in the spectral domain have the same distribution as in the original domain, that is $\bfzeta \sim N(\bfzero,\sigma^2 \bfI)$.
In addition, the covariance matrix of the spatial random effects in the spectral domain is $Cov(\bfxi) = Cov(\bfP'\bfphi) = \sigma^2\tau^{-1}\bfP'\bfH^+\bfP = \sigma^2\tau^{-1}\bfD^{+} = \sigma^2\tau^{-1} \diag(d_1^{-1},\ldots,d_{n-1}^{-1},0)$.
Thus, in the spectral domain the hierarchical model with ICAR spatial random effects may be written as \citep{Ferreira2021}
\begin{eqnarray}
\ysp & = & \xsp \bfbeta + \bfxi + \bfzeta,  \ \ \bfzeta \sim N(\bfzero,\sigma^2 \bfI), \label{eqn:spectral.hier.model1} \\
\bfxi  & | & \sigma^2, \tau  \sim  N(\bfzero,\sigma^2 \tau^{-1} \bfD^+). \label{eqn:spectral.hier.model2}
\end{eqnarray}

Now, integrate out the random effects $\bfxi$ to obtain in the spectral domain  the model
\begin{equation}
\ysp = \xsp \bfbeta + \bfeta,  \ \ \bfeta \sim N(\bfzero,\sigma^2 \{\bfI + \tau^{-1} \bfD^+\}). \label{eqn:spectral.integrated likelihood} 
\end{equation}

Because the covariance matrix $\bfI + \tau^{-1} \bfD^+$ is diagonal, computations in the spectral domain are much simpler, faster, and more scalable than computations in the original spatial domain. For example, since $\bfD^+=\diag(d_1^{-1},\ldots,d_{n-1}^{-1},0)$, then
$|\bfOmega| = |\bfI_n + \tau^{-1} \bfH^+| = |\bfI_n + \tau^{-1} \bfP\bfD^+\bfP'| = |\bfI_n + \tau^{-1} \bfD^+| = \prod_{i=1}^{n-1}(1+\tau^{-1}d_i^{-1})$ which costs $O(n)$ operations.

Let $\bfB(\tau)=(\bfI + \tau^{-1} \bfD^+)^{-1}$. In addition, let 
\begin{equation} \label{eqn:definition-of-bfb}
\bfb(\tau) = \left( \frac{\tau d_1}{\tau d_1 + 1}, \cdots, \frac{\tau d_{n-1}}{\tau d_{n-1} + 1}, 1 \right)'.
\end{equation}
Then, $\bfB(\tau)=\diag(\bfb(\tau))$.
Further, let $\otimes$ denote the Kronecker product and $\odot$ denote the Hadamard product \citep[p. 45,][]{magn:neud:1999}
that returns the matrix of element-wise products. 
Let $\bfQ(\tau)=
\xsp' (\bfI + \tau^{-1} \bfD^+)^{-1} = \xsp'\bfB(\tau) = 
\left\{ \xsp \odot (\bfone_p' \otimes \bfb(\tau)) \right\}',
$
where the operation on the left costs about $O(n^2)$ operations, whereas the operation on the right costs just $O(n)$ operations. 
Then, 
$|\bfX_{c}' \bfOmega^{-1} \bfX_{c}| 
= |\bfX_{c}' (\bfI_n + \tau^{-1} \bfH^+)^{-1} \bfX_{c}|
=|\bfX_{c}' (\bfI_n + \tau^{-1} \bfP\bfD^+\bfP')^{-1} \bfX_{c}|
=|\bfX_{c}' \bfP(\bfI_n + \tau^{-1} \bfD^+)^{-1} \bfP'\bfX_{c}|
=|\xsp_{c}' (\bfI_n + \tau^{-1} \bfD^+)^{-1} \xsp_{c}|
=|\bfQ(\tau)\xsp_{c}|$. Thus, in terms of sample size, this determinant costs $O(n)$ operations. 

Finally, let $\bfR(\tau)=
\ysp' (\bfI + \tau^{-1} \bfD^+)^{-1} = \ysp'\bfB(\tau) = 
\left\{ \ysp \odot \bfb(\tau) \right\}'
$. Then
\begin{eqnarray}
S_{c}^2 & = & \bfy'(\bfOmega^{-1}-\bfOmega^{-1}\bfX_{c}(\bfX_{c}' \bfOmega^{-1}\bfX_{c})^{-1}\bfX_{c}'\bfOmega^{-1})\bfy \nonumber \\
& = & \ysp'(\bfB(\tau)-\bfB(\tau)\xsp_{c}(\bfQ(\tau)\xsp_{c})^{-1}\xsp_{c}'\bfB(\tau))\ysp \nonumber \\
& = & \bfR(\tau)\ysp-\bfR(\tau)\xsp_{c}(\bfQ(\tau)\xsp_{c})^{-1}\xsp_{c}'\bfR'(\tau), \nonumber
\end{eqnarray}
which costs $O(n)$ operations for each model $M_c$.

\section{New reference prior for the model parameters}\label{sec:new-reference-prior}

The reference prior presented in Section~\ref{sec:existing-reference-prior} was developed by \citet{keef:ferr:fran:2019} based on results given by \citet{deoliveira2007}. In this section, we provide another reference prior that is based on a theorem given by \citet{berg:oliv:sans:2001}; for completeness, we provide their theorem in the Auxiliary Facts section of the Appendix.

Recall that $\bfB(\tau) = (\bfI_{n} + \tau^{-1}\bfD^{+})^{-1}$. 
Let $\bfQ_{ij}(\tau) = \widetilde{\bfX}' \{\bfD^{+}\}^{i} \{\bfB(\tau)\}^{j} \widetilde{\bfX}$. 
Because both $\bfB(\tau)$ and $\bfD^{+}$ are diagonal, $\bfQ_{ij}(\tau)=\widetilde{\bfX}' \{\bfB(\tau)\}^{j} \{\bfD^{+}\}^{i} \widetilde{\bfX}$. 
Then, the following theorem provides a form for the reference prior of $(\bfbeta,\sigma^2,\tau)$ that is based on traces of matrices.

\begin{theorem}\label{theo:compact-form-reference-prior} The reference prior of $(\bfbeta,\sigma^2,\tau)$ is of the form $\pi(\bfbeta,\sigma^2,\tau) \propto \sigma^{-2}\pi(\tau)$, where
\begin{eqnarray}\label{eqn:compact-form-reference-prior}
\pi(\tau) & \propto & \tau^{-2} \bigg\{\text{tr}\left((\bfD^+ \bfB(\tau))^2\right) 
    + \text{tr}\left( (\{\bfQ_{01}(\tau)\}^{-1} \bfQ_{12}(\tau))^2\right) 
  - 2\text{tr}\left( \{\bfQ_{01}(\tau)\}^{-1} \bfQ_{23}(\tau) \right) \nonumber \\
& & \ \ \ \ \ \ \ \ -\frac{1}{n-q} \left[\text{tr}(\{\bfQ_{01}(\tau)\}^{-1}\bfQ_{12}(\tau))-\text{tr}(\bfD^{+}\bfB(\tau))\right]^2 \bigg\}^{1/2}.
\end{eqnarray}
\end{theorem}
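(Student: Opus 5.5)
The plan is to apply the theorem of \citet{berg:oliv:sans:2001} (restated in the Auxiliary Facts section of the Appendix) to the marginal model obtained after integrating out $\bfphi$, and then to evaluate the resulting traces in the spectral domain. That theorem concerns $\bfy \sim N(\bfX\bfbeta, \sigma^2 \bfSigma(\theta))$ with $\bfX$ of full column rank $q$. It gives the reference prior $\pi(\bfbeta,\sigma^2,\theta)\propto \sigma^{-2}\pi(\theta)$ with
\[
\pi(\theta)\propto\Big\{\text{tr}(\bfW_\theta^2)-\frac{1}{n-q}\big(\text{tr}\,\bfW_\theta\big)^2\Big\}^{1/2},
\]
where
\[
\bfW_\theta=\Big(\frac{\partial}{\partial\theta}\bfSigma(\theta)\Big)\bfSigma^{-1}(\theta)\,\bfP_\theta,\qquad
\bfP_\theta=\bfI-\bfX(\bfX'\bfSigma^{-1}\bfX)^{-1}\bfX'\bfSigma^{-1}.
\]
The steps are as follows.
\begin{itemize}
\item Integrating $\bfphi$ out of Equation~(\ref{eqn:modelequation}) gives $\bfy\sim N(\bfX\bfbeta,\sigma^2\bfOmega)$ with $\bfOmega=\bfI_n+\tau^{-1}\bfH^+$, which is positive definite and smooth in $\tau$. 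So the theorem applies with $\theta=\tau$ and $q=p$, the column dimension of $\bfX$.
\item Because $\bfP$ is orthogonal, $\bfW_\tau$ under the original model is similar to its counterpart under the spectral model (\ref{eqn:spectral.integrated likelihood}). Explicitly, $\bfW_\tau=\bfP\widetilde{\bfW}_\tau\bfP'$, where $\widetilde{\bfW}_\tau$ is built from $\widetilde{\bfSigma}=\bfI+\tau^{-1}\bfD^+$ and $\xsp$. Hence $\text{tr}\,\bfW_\tau$ and $\text{tr}\,\bfW_\tau^2$ can be computed in the spectral domain.
\end{itemize}

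Next we compute $\widetilde{\bfW}_\tau$. We have $\partial\widetilde{\bfSigma}/\partial\tau=-\tau^{-2}\bfD^+$ and $\widetilde{\bfSigma}^{-1}=\bfB(\tau)$. Write $\bfA=\bfD^+\bfB(\tau)$ and $\bfC=\xsp\{\bfQ_{01}(\tau)\}^{-1}\xsp'\bfB(\tau)$. Then $\widetilde{\bfW}_\tau=-\tau^{-2}(\bfA-\bfA\bfC)$. Since $\bfD^+$ and $\bfB$ are diagonal, they commute, and the cyclic property of the trace gives the following.
\begin{itemize}
\item $\text{tr}(\bfA\bfC)=\text{tr}(\{\bfQ_{01}\}^{-1}\xsp'\bfB\bfD^+\bfB\xsp)=\text{tr}(\{\bfQ_{01}\}^{-1}\bfQ_{12})$, so $\text{tr}\,\widetilde{\bfW}_\tau=-\tau^{-2}\big[\text{tr}(\bfD^+\bfB)-\text{tr}(\{\bfQ_{01}\}^{-1}\bfQ_{12})\big]$.
\item Expanding the square, $\text{tr}\{(\bfA-\bfA\bfC)^2\}=\text{tr}(\bfA^2)-2\,\text{tr}(\bfA^2\bfC)+\text{tr}(\bfA\bfC\bfA\bfC)$, using $\text{tr}(\bfA\bfC\bfA)=\text{tr}(\bfA^2\bfC)$.
\item $\text{tr}(\bfA^2\bfC)=\text{tr}(\{\bfQ_{01}\}^{-1}\xsp'\bfB(\bfD^+\bfB)^2\xsp)=\text{tr}(\{\bfQ_{01}\}^{-1}\bfQ_{23})$.
\item $\text{tr}(\bfA\bfC\bfA\bfC)=\text{tr}\big((\{\bfQ_{01}\}^{-1}\xsp'\bfB\bfD^+\bfB\xsp)^2\big)=\text{tr}\big((\{\bfQ_{01}\}^{-1}\bfQ_{12})^2\big)$.
\end{itemize}
Substituting these into the theorem's formula, the common factor $\tau^{-4}$ under the square root becomes the prefactor $\tau^{-2}$. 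This yields exactly Equation~(\ref{eqn:compact-form-reference-prior}) with $q=p$.

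The main obstacle is bookkeeping rather than ideas. First, the exponents of $\bfD^+$ and $\bfB$ must be tracked correctly so that each trace matches the right $\bfQ_{ij}$; in particular $\bfB$ appears once from $\bfSigma^{-1}$, once inside $\bfP_\tau$, and once per factor of $\bfA$. Second, the sign of the derivative and the square of the cross term must be handled correctly. Finally, I would check that the regularity hypotheses of \citet{berg:oliv:sans:2001} hold, namely full rank of $\bfX$ and $\bfOmega$ positive definite and differentiable in $\tau>0$. Both follow from the assumptions of Section~\ref{sec:modelsub}.
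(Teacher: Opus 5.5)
Your proposal is correct and follows essentially the same route as the paper. Like the paper, you apply the Berger--De Oliveira--Sans\'o theorem to $\bfy \sim N(\bfX\bfbeta,\sigma^2(\bfI_n+\tau^{-1}\bfH^+))$ and write $\bfW_\tau=\bfP(\cdot)\bfP'$ via the spectral decomposition of $\bfH$. You then reduce $\text{tr}(\bfW_\tau)$ and $\text{tr}(\bfW_\tau^2)$ to the $\bfQ_{ij}(\tau)$ terms using cyclicity of the trace and the commutativity of the diagonal matrices $\bfD^+$ and $\bfB(\tau)$. Your reading of the undefined $q$ in the statement as $q=p$ is the right one.
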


\begin{proof}
See the Appendix.
\end{proof}

The largest $n-1$ eigenvalues $d_1, \ldots, d_{n-1}$ of the matrix $\bfH$ (recall $d_n=0$) can be computed just once and used for all competing models. These eigenvalues are used to compute the matrices $\bfD^+$, $\bfB(\tau)$, and $\bfQ_{ij}(\tau)$. Given these eigenvalues, the computation of the prior given in Equation~(\ref{eqn:compact-form-reference-prior}) scales as $O(n)$.


%

\begin{theorem}\label{theo:equivalence-of-priors}
The marginal reference prior for $\tau$ given in Equation~(\ref{eqn:compact-form-reference-prior}) is equivalent to the marginal reference prior for $\tau$ given in Equation~(\ref{eqn:referenceprior-tau}).
\end{theorem}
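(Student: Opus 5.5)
The plan is to show that the curly bracket in Equation~(\ref{eqn:compact-form-reference-prior}) equals $\tau^{2}$ times the square bracket in Equation~(\ref{eqn:referenceprior-tau}). Then the prefactors match, since $\tau^{-2}(\tau^2[\cdot])^{1/2}=\tau^{-1}[\cdot]^{1/2}$. Throughout I read $n-q$ in Theorem~\ref{theo:compact-form-reference-prior} as $n-p$.

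\emph{Step 1: rewrite the bracket as trace moments of one matrix.} In the spectral domain $\bfOmega$ becomes $\bfB(\tau)^{-1}=\bfI+\tau^{-1}\bfD^+$, with $\partial\bfOmega/\partial\tau=-\tau^{-2}\bfD^+$. Define
\[
\bfV=\bfD^+\bfB(\tau)-\bfD^+\bfB(\tau)\xsp\{\bfQ_{01}(\tau)\}^{-1}\xsp'\bfB(\tau),
\]
which is $\tau^2$ times the matrix $W$ of the \citet{berg:oliv:sans:2001} formula, up to sign. Expanding $\text{tr}(\bfV^2)$ and using cyclicity of the trace together with the diagonality of $\bfD^+$ and $\bfB(\tau)$ gives
\[
\text{tr}(\bfV^2)=\text{tr}\big((\bfD^+\bfB)^2\big)-2\,\text{tr}\big(\bfQ_{01}^{-1}\bfQ_{23}\big)+\text{tr}\big((\bfQ_{01}^{-1}\bfQ_{12})^2\big).
\]
Similarly, $\text{tr}(\bfV)=\text{tr}(\bfD^+\bfB)-\text{tr}(\bfQ_{01}^{-1}\bfQ_{12})$. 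Hence the curly bracket in (\ref{eqn:compact-form-reference-prior}) is exactly $\text{tr}(\bfV^2)-(n-p)^{-1}\{\text{tr}(\bfV)\}^2$. Back in the spatial domain, $\bfV=\bfP'\,\bfH^+\bfOmega^{-1}(\bfI-\bfX(\bfX'\bfOmega^{-1}\bfX)^{-1}\bfX'\bfOmega^{-1})\,\bfP$. Since $\bfP$ is orthogonal, the traces of powers of $\bfV$ equal those of $\bfH^+\bfK$, where
\[
\bfK=\bfOmega^{-1}-\bfOmega^{-1}\bfX(\bfX'\bfOmega^{-1}\bfX)^{-1}\bfX'\bfOmega^{-1}.
\]

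\emph{Step 2: the key identity.} I will show $\bfK=\bfM^*({\bfM^*}'\bfOmega\bfM^*)^{-1}{\bfM^*}'$, the standard REML identity. Here $\bfM^*$ has orthonormal columns spanning the orthogonal complement of the column space of $\bfX$. The planned argument uses the fact that the column spaces of $\bfOmega^{1/2}\bfM^*$ and $\bfOmega^{-1/2}\bfX$ are orthogonal and together span $\mathbb{R}^n$. Consequently,
\[
\bfI=\bfOmega^{1/2}\bfM^*({\bfM^*}'\bfOmega\bfM^*)^{-1}{\bfM^*}'\bfOmega^{1/2}+\bfOmega^{-1/2}\bfX(\bfX'\bfOmega^{-1}\bfX)^{-1}\bfX'\bfOmega^{-1/2}.
\]
Pre- and post-multiplying by $\bfOmega^{-1/2}$ gives the identity. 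I expect this to be the main obstacle, mostly as bookkeeping: checking that the two projections are complementary and that $\bfM^*$ is the right basis.

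\emph{Step 3: eigenvalues.} Write $\bfA={\bfM^*}'\bfH^+\bfM^*$, whose eigenvalues are $\lambda_1,\ldots,\lambda_{n-p}$. Then ${\bfM^*}'\bfOmega\bfM^*=\bfI_{n-p}+\tau^{-1}\bfA$. By cyclicity, for $k=1,2$,
\[
\text{tr}\big((\bfH^+\bfK)^k\big)=\text{tr}\big(\{\bfA(\bfI+\tau^{-1}\bfA)^{-1}\}^k\big).
\]
The matrix $\bfA(\bfI+\tau^{-1}\bfA)^{-1}$ has eigenvalues $\tau\lambda_j/(\tau+\lambda_j)$, because $\bfA$ and $(\bfI+\tau^{-1}\bfA)^{-1}$ are simultaneously diagonalizable. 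Therefore
\[
\text{tr}(\bfV^2)-\frac{1}{n-p}\{\text{tr}(\bfV)\}^2=\tau^2\Big[\sum_j\Big(\frac{\lambda_j}{\tau+\lambda_j}\Big)^2-\frac{1}{n-p}\Big\{\sum_j\frac{\lambda_j}{\tau+\lambda_j}\Big\}^2\Big],
\]
and taking square roots and multiplying by $\tau^{-2}$ gives (\ref{eqn:referenceprior-tau}).
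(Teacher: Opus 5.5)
Your proposal is correct, and it takes a genuinely different route from the paper.

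\textbf{The paper's route.} It works only with spectral-domain quantities. It sets $\mathbf{E}=(\bfD^+)^{1/2}$ and $\widetilde{\bfG}=\bfP'\bfG\bfP$, and argues that $\mathbf{E}\widetilde{\bfG}\mathbf{E}$ has eigenvalues $\lambda_1,\ldots,\lambda_{n-p}$ plus $p$ zeros. It then forms the resolvent $(\tau\bfI_n+\mathbf{E}\widetilde{\bfG}\mathbf{E})^{-1}$ (which the paper also calls $\bfV$, but it is not your matrix) and writes $T_1$ and $T_2$ through the traces of this resolvent and its square. The Woodbury identity expresses the resolvent through $\bfB(\tau)$ and $\{\bfQ_{01}(\tau)\}^{-1}$. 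The proof finishes with the auxiliary identities $\bfI_n-\bfB(\tau)=\tau^{-1}\bfD^+\bfB(\tau)$ and $\bfQ_{12}(\tau)-\bfQ_{13}(\tau)=\tau^{-1}\bfQ_{23}(\tau)$.

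\textbf{Your route.} You first identify the curly bracket as $\text{tr}(\bfV^2)-(n-p)^{-1}\{\text{tr}(\bfV)\}^2$ for your $\bfV=\bfP'\bfH^+\mathbf{K}\bfP=-\tau^2\bfP'\bfW_\tau\bfP$; this is essentially the computation of Appendix~2. You then use the error-contrast identity $\mathbf{K}={\bfM^*}({\bfM^*}'\bfOmega\bfM^*)^{-1}{\bfM^*}'$ to collapse everything onto the $(n-p)$-dimensional orthogonal complement of the column space of $\bfX$. There ${\bfM^*}'\bfOmega\bfM^*=\bfI_{n-p}+\tau^{-1}\mathbf{A}$, which makes the eigenvalues $\tau\lambda_j/(\tau+\lambda_j)$ immediate.

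\textbf{What each buys.} Your argument is shorter and more transparent. It shows that $-\tau^2\bfW_\tau=\bfH^+\mathbf{K}$ has nonzero spectrum exactly $\{\tau\lambda_j/(\tau+\lambda_j)\}$. So Equation~(\ref{eqn:referenceprior-tau}) is the same Fisher-information expression computed from the error contrasts ${\bfM^*}'\bfy\sim N(\bfzero,\sigma^2(\bfI_{n-p}+\tau^{-1}\mathbf{A}))$. It needs no Woodbury bookkeeping and no eigenvalue counting for $\mathbf{E}\widetilde{\bfG}\mathbf{E}$. The only hypotheses it uses are that $\bfX$ has full column rank and that $\bfM^*$ has orthonormal columns orthogonal to $\bfX$, and you use both correctly. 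The paper's route is a purely algebraic verification in the diagonal spectral-domain notation in which the new prior is actually evaluated, and it never invokes the projection identity. Your reading of $n-q$ as $n-p$ matches what the paper's proof implicitly uses.
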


\begin{proof}
See the Appendix.
\end{proof}

Theorem~\ref{theo:equivalence-of-priors} establishes that the marginal reference prior for $\tau$ given in Equation~(\ref{eqn:compact-form-reference-prior}) is a remarkably elegant formulation that transforms the sum over eigenvalues used in Equation~(\ref{eqn:referenceprior-tau}) into a set of trace operations that entirely bypasses the need for an explicit eigendecomposition of the projection space. As we show in Section~\ref{sec:computational-time}, this novel formulation has tremendous practical impact in terms of computational time savings.

\begin{figure}[t!]
\begin{center} 
\includegraphics*[width=3in,height=3in]{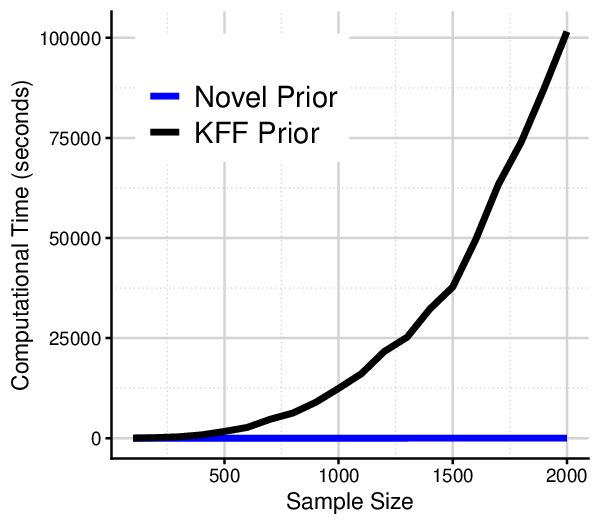} 
\caption{\label{fig:compute-time-model-selection} Computational times in seconds as a function of sample size for Bayesian model selection based on exhaustive search for 5 regressors.}
\end{center}
\end{figure}

\section{Simulation study}\label{sec:computational-time}

This section presents a simulation study to compare the computational cost  of the KFF prior and our novel reference prior. Computations for the KFF prior are as implemented in the R package ref.ICAR version 2.0.2 \citep{port:etal:2025}.
All computations presented here were performed in R version 4.2.1 optimized with Intel's Math Kernel Library on a MacBook Pro with a 2.7 GHz Intel Core i7 processor and  MacOS Monterey operating system. 

We simulate data from hierarchical models with a one-dimensional spatial dataset with ICAR random effects. In this simulation study, we consider  sample sizes: $100, 200, \ldots, 2000$. In addition, the variance parameters are $\tau=0.3$ and $\sigma^2=1$. The regressors were simulated from i.i.d. standard normal distributions, and then centered to sum to zero. We consider one intercept and 5 candidate regressors. The vector of regression coefficients including the intercept is $\bfbeta=(1, 1,1,0,0,0)'$. 

Figure~\ref{fig:compute-time-model-selection} shows the computational time in seconds for computations based on the KFF prior and on our novel reference prior as a function of sample size. For all datasets, computations based our new reference prior were faster than those based on the KFF prior. Specifically, for a sample size of 100 regions, computations for the KFF prior took 18.8 seconds and those based on the novel reference prior took 1 second. More dramatically, for a sample size of 2000 regions, computations for the KFF prior took 28 hours whereas those based on the novel reference prior took only 19.8 seconds.
Therefore, when compared to the KFF prior, our novel reference prior produces a substantial decrease in computational time.

\section{Application: Household income in the United States\label{sec:application}}

To illustrate the utility of our novel reference prior for model selection for large spatial datasets, we consider a hierarchical model with ICAR random effects for the logarithm of median household income in the contiguous United States in 2017 per county for a total of 3,108 counties (or similar geopolitical entities). These data have been previously analyzed by \citet{Ferreira2021} and \citet{Porter2024}. When compared to these previous works, here we consider additional regressors. 
Specifically, we consider the following 11 county-level socio-economic regressors: 
the logarithm of population size, 
the logarithm of the percentage of adults with high school degree, 
the logarithm of the percentage of adults with some college or associate degree,
the logarithm of the percentage of adults with bachelors degree or higher, and indicators of whether the county was in a large metropolitan area, medium metropolitan area, small metropolitan area, urban area adjacent to metropolitan area, urban area not adjacent to metropolitan area, and rural area adjacent to metropolitan area. Hence, in the model considered, the baseline education is less than high school and the baseline metro status is rural area not adjacent to a metropolitan area.

Computations for this application were performed on the same laptop used for the computations presented in Section~\ref{sec:computational-time}.
Computations using the KFF prior would have taken several months and, thus, were deemed not practically feasible. Hence, here we report results based only on our novel reference prior. Computations based on our novel reference prior took 27.3 minutes. 
Therefore, the novel reference prior we propose leads to cost-effective computations.

\begin{table}[t]
\centering
\caption{Case study: Logarithm of median household income. 
Posterior inclusion probabilities for candidate regressors based on novel reference prior. 
Baseline education is less than high school. 
Baseline metro status is rural area not adjacent to a metropolitan area. 
\label{table:posterior-inclusion-probabilities}}
\begin{tabular}{c c}
\hline
\textbf{Regressor} & \textbf{Posterior inclusion probability} \\
\hline
log(population) & 0.3853  \\
log(high school) & 0.1403  \\
log(associate degree) & 0.9998  \\
log(bachelor) & 1.0000  \\
log(unemployment) & 1.0000  \\
metro large & 1.0000  \\
metro medium & 1.0000  \\
metro small & 1.0000  \\
urban adjacent & 1.0000  \\
urban not adjacent & 0.9988  \\
rural adjacent & 1.0000  \\
\hline
\end{tabular}
\end{table}

Table~\ref{table:posterior-inclusion-probabilities} presents the 
posterior inclusion probabilities for each of the 11 candidate regressors based on the novel reference prior. The posterior inclusion probabilities of each of the county metro status indicators are tremendously close to one. Hence, the county metro status is an important predictor of county level median household income. In addition, the posterior inclusion probability of the logarithm of population size is 0.3853, which indicates that when including county metro status in the model, the data does not support the logarithm of population size as an important predictor of county level median household income. Further, given its posterior inclusion probability of 0.1403, and keeping in mind that the the baseline education is less than high school, there is some evidence that the logarithm of the percentage of adults with high school degree is not an important predictor of median household income. Finally, there is strong evidence that the percentage of adults with associate degree or some college (pip=0.9998) and the percentage of adults with bachelors degree (pip=1.0000) are important predictors of county level median household income.

\section{Discussion} \label{sec:discussion}

We have proposed a novel reference prior for Gaussian hierarchical models with ICAR random effects. 
When compared to a previously proposed reference prior \citep{keef:ferr:fran:2019}, this novel prior provides equivalent posterior analysis and, for large sample sizes, is orders of magnitude faster for objective Bayes model selection computations.

Our results are for hierarchical models with Gaussian observations. 
Because there are no reference priors for hierarchical models with ICAR random effects when the observations are nonGaussian, for example binomial or Poisson, a promising future research direction is the development of reference priors for these models. 
Another related possible future direction is the development of spectral computations for such models, which would be useful for the analysis of spatial nonGaussian data.

\section*{Declaration of generative AI and AI-assisted technologies in the manuscript preparation process}

During the preparation of this work the author(s) used Gemini 3.1 Pro in order to help develop the proof of Theorem~\ref{theo:equivalence-of-priors}. After using this tool/service, the author reviewed and edited the content as needed and takes full responsibility for the content of the published article.

\section*{Appendix}

\begin{Appendix}
\setcounter{section}{0}
\renewcommand\thesection{\Alph{section}}

\section*{Appendix 1: Auxiliary results}

\begin{theorem}[\citealp{berg:oliv:sans:2001}]
\label{Theo:Bergeretal2001}
Consider the model $\bfY \mid Q \sim N(\bfZ\bftheta, \sigma^{2}\bfSigma_{\tau})$,
where $\bfY$ is an $n$-dimensional vector of observations, $\bfZ$ is an $n$ by $p$ matrix of regressors, $\bftheta$ is a $p$-dimensional vector of regression coefficients, $\sigma^{2}>0$, and $\bfSigma_{\tau}$ is a positive definite matrix that depends on a parameter $\tau$. 
Define the matrix $\bfW_{\tau}$ as
\begin{equation}
    \bfW_{\tau} = \left(\frac{\partial}{\partial \tau}\bfSigma_{\tau}\right)\bfSigma_{\tau}^{-1}\bfP_\tau^{\Sigma}
\end{equation}
where $\bfP_\tau^{\Sigma} = \bfI_n - \bfZ(\bfZ'\bfSigma_{\tau}^{-1}\bfZ)^{-1}\bfZ'\bfSigma_{\tau}^{-1}$ is a projection matrix. 
Then, the reference prior for $(\bftheta, \sigma^2, \tau)$ is of the form $\pi(\bftheta, \sigma^2, \tau) \propto \sigma^{-2}\pi(\tau)$, where the marginal reference prior for $\tau$ is 
\begin{equation}
    \pi(\tau) \propto \left\{\text{tr}(\bfW_{\tau}^{2}) - \frac{1}{n-p}(\text{tr}(\bfW_{\tau}))^{2}\right\}^{1/2}.
\end{equation}
\end{theorem}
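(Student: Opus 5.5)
We would follow the reference-prior algorithm with the parameter split into two groups: the nuisance location vector $\bftheta$ and the group of interest $(\sigma^2,\tau)$. The derivation goes through the restricted (integrated) likelihood of $(\sigma^2,\tau)$ and ends with a trace identity.

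\emph{Step 1: conditional prior for $\bftheta$.} Given $(\sigma^2,\tau)$, the model is a Gaussian location model in $\bftheta$. Its Fisher information $\bfZ'\bfSigma_\tau^{-1}\bfZ/\sigma^2$ does not depend on $\bftheta$. Hence the conditional reference prior is $\pi(\bftheta\mid\sigma^2,\tau)\propto 1$. Strictly, this uses compact sets $\Theta_m\uparrow\mathbb{R}^p$. Their normalizing constants do not depend on $(\sigma^2,\tau)$, so they cancel in the limit; this limiting argument is the part we expect to need the most care.

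\emph{Step 2: integrate out $\bftheta$.} Integrating the likelihood against the flat conditional prior gives
$$L^I(\sigma^2,\tau)\propto(\sigma^2)^{-(n-p)/2}|\bfSigma_\tau|^{-1/2}|\bfZ'\bfSigma_\tau^{-1}\bfZ|^{-1/2}\exp\{-\bfy'\bfSigma_\tau^{-1}\bfP^\Sigma_\tau\bfy/(2\sigma^2)\}.$$
This coincides, up to factors free of $(\sigma^2,\tau)$, with the likelihood of the error contrasts $\bfA'\bfY\sim N(\bfzero,\sigma^2\bfA'\bfSigma_\tau\bfA)$. Here $\bfA$ is any $n\times(n-p)$ full-rank matrix with $\bfA'\bfZ=\bfzero$.

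\emph{Step 3: Jeffreys prior for $(\sigma^2,\tau)$.} The reference prior for the group $(\sigma^2,\tau)$ is the Jeffreys prior of this $(n-p)$-dimensional zero-mean Gaussian model. Write $\Gamma_\tau=\bfA'\bfSigma_\tau\bfA$ and $\bfU_\tau=\dot\Gamma_\tau\Gamma_\tau^{-1}$, where the dot denotes $\partial/\partial\tau$. The standard Gaussian Fisher information formula gives
$$I_{\sigma^2\sigma^2}=\frac{n-p}{2\sigma^4},\qquad I_{\sigma^2\tau}=\frac{\text{tr}(\bfU_\tau)}{2\sigma^2},\qquad I_{\tau\tau}=\frac{\text{tr}(\bfU_\tau^2)}{2}.$$
Therefore
$$\det I\propto\sigma^{-4}\left\{\text{tr}(\bfU_\tau^2)-\frac{1}{n-p}\{\text{tr}(\bfU_\tau)\}^2\right\}.$$
Taking the square root yields the product form $\sigma^{-2}\pi(\tau)$. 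Combined with Step 1, this gives $\pi(\bftheta,\sigma^2,\tau)\propto\sigma^{-2}\pi(\tau)$.

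\emph{Step 4: replace $\bfU_\tau$ by $\bfW_\tau$.} It remains to show $\text{tr}(\bfU_\tau^k)=\text{tr}(\bfW_\tau^k)$ for $k=1,2$. The key identity is
$$\bfA(\bfA'\bfSigma_\tau\bfA)^{-1}\bfA'=\bfSigma_\tau^{-1}\bfP^\Sigma_\tau.$$
To prove it, note that both sides are symmetric and annihilate $\bfZ$. Both also act as the identity after left-multiplication by $\bfSigma_\tau$ on the range of $\bfSigma_\tau\bfA$. Since $\mathbb{R}^n$ is the direct sum of $\text{col}(\bfZ)$ and $\text{col}(\bfSigma_\tau\bfA)$, the two sides agree.

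With $\dot\Gamma_\tau=\bfA'\dot\bfSigma_\tau\bfA$, cyclic invariance of the trace then gives
$$\text{tr}(\bfU_\tau^k)=\text{tr}\big((\dot\bfSigma_\tau\bfA\Gamma_\tau^{-1}\bfA')^k\big)=\text{tr}\big((\dot\bfSigma_\tau\bfSigma_\tau^{-1}\bfP^\Sigma_\tau)^k\big)=\text{tr}(\bfW_\tau^k).$$
Substituting into Step 3 completes the argument. Besides the limiting argument in Step 1, this projection identity is the main technical step.
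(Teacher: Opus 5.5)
The paper does not prove this statement. It is quoted in Appendix~1 as an auxiliary fact from \citet{berg:oliv:sans:2001}. Your proposal reconstructs the standard derivation from that source: a flat conditional prior for $\bftheta$, followed by the Jeffreys prior of the integrated likelihood of $(\sigma^2,\tau)$.

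The argument is correct. The following all check out:
\begin{itemize}
\item the integrated likelihood;
\item the $2\times 2$ information matrix and its determinant;
\item the cyclic step $\text{tr}\{(\mathbf{A}'\mathbf{F})^k\}=\text{tr}\{(\mathbf{F}\mathbf{A}')^k\}$ with $\mathbf{F}=\dot{\bfSigma}_\tau\mathbf{A}\Gamma_\tau^{-1}$;
\item your proof of $\mathbf{A}(\mathbf{A}'\bfSigma_\tau\mathbf{A})^{-1}\mathbf{A}'=\bfSigma_\tau^{-1}\bfP^\Sigma_\tau$.
\end{itemize}
In that last proof, the trivial intersection of $\mathrm{col}(\bfZ)$ and $\mathrm{col}(\bfSigma_\tau\mathbf{A})$ follows from nonsingularity of $\bfZ'\bfSigma_\tau^{-1}\bfZ$. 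You are therefore implicitly using full column rank of $\bfZ$, which the statement presupposes. Routing the computation through the error contrasts is also the clean way to make the Fisher information of $L^I$ meaningful. $L^I$ is not normalizable in $\bfy$, but its score depends on $\bfy$ only through $\mathbf{A}'\bfy$.

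Say explicitly that Step 3 is where the specific version of the reference algorithm enters. Suppose you instead applied the asymptotic Berger--Bernardo formula $\pi(\sigma^2,\tau)\propto\exp\{\tfrac12\int\pi(\bftheta\mid\sigma^2,\tau)\log(|I|/|I_{\theta\theta}|)\,d\bftheta\}$ to the full-data Fisher information. That information is block diagonal in $\bftheta$ and $(\sigma^2,\tau)$, so the formula yields the independence Jeffreys prior, with $\dot{\bfSigma}_\tau\bfSigma_\tau^{-1}$ and $n$ in place of $\bfW_\tau$ and $n-p$.

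The theorem's prior is the reference prior in the integrated-likelihood sense used by \citet{berg:oliv:sans:2001} and inherited by this paper. That is precisely what your Steps 2--3 compute, so your proof establishes the statement as intended. The compact-set limit in Step 1 is the usual formality. It is valid because the $\bftheta$-compact sets are taken independent of $(\sigma^2,\tau)$, so their normalizing constants cancel.
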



\section*{Appendix 2: Proof of Theorem \ref{theo:compact-form-reference-prior}}

The model we consider is $\bfy \sim N(\bfX\bfbeta,\sigma^2(\bfI_n+\tau^{-1}\bfH^+))$. 
In the notation of Theorem~\ref{Theo:Bergeretal2001}, $\bfY=\bfy$, $\bfZ = \bfX$, $\bftheta = \bfbeta$, and
$\bfSigma_{\tau} = \bfI_{n} + \tau^{-1}\bfH^{+}$.

Consider the spectral decomposition $\bfH = \bfP\bfD\bfP'$, where $\bfP=(\bfp_1, \ldots, \bfp_n)$ is the matrix of eigenvectors of $\bfH$, $\bfp_n=n^{-1/2}\bfone_n$, $\bfD=\diag(d_1,\ldots,d_n)$, and $d_1\geq\cdots\geq d_{n-1} >d_n=0$ are the eigenvalues of $\bfH$. 
Then, $\bfSigma_{\tau} = \bfP(\bfI_{n} + \tau^{-1}\bfD^{+})\bfP'$, its inverse is
$\bfSigma_{\tau}^{-1} = \bfP(\bfI_{n} + \tau^{-1}\bfD^{+})^{-1}\bfP'$,
and its derivative with respect to $\tau$ is:
\begin{equation}
    \frac{\partial}{\partial\tau}\bfSigma_{\tau} = -\tau^{-2}\bfH^{+} = -\tau^{-2}\bfP\bfD^{+}\bfP',
\end{equation}
where $\bfD^+=\diag(d_1^{-1},\ldots,d_{n-1}^{-1},0)$ is the Moore-Penrose inverse of $\bfD$, and $\bfH^{+} =\bfP\bfD^{+}\bfP'$ is the Moore-Penrose inverse of $\bfH$.

Substituting the components into the definition of $\bfW_{\tau}$, we obtain
\begin{align}
    \bfW_{\tau} &= \left(\frac{\partial}{\partial\tau}\bfSigma_{\tau}\right)\bfSigma_{\tau}^{-1}\bfP_{\tau}^{\Sigma} \nonumber \\
    &= -\tau^{-2}\bfP\bfD^{+}\bfP' \bfP(\bfI_{n} + \tau^{-1}\bfD^{+})^{-1} \bfP' \bfP_{\tau}^{\Sigma} \nonumber \\
    &= -\tau^{-2}\bfP\bfD^{+}(\bfI_n + \tau^{-1}\bfD^{+})^{-1}\bfP' \bfP_{\tau}^{\Sigma} \nonumber \\
    &= -\tau^{-2}\bfP\bfD^{+}(\bfI_n + \tau^{-1}\bfD^{+})^{-1}\bfP' \nonumber \\
    &\quad + \tau^{-2}\bfP\bfD^{+}(\bfI_n + \tau^{-1}\bfD^{+})^{-1} \bfP' \bfX(\bfX'\bfP(\bfI_n + \tau^{-1}\bfD^{+})^{-1} \bfP'\bfX)^{-1} \bfX'\bfP(\bfI_n + \tau^{-1}\bfD^{+})^{-1}\bfP'.
\end{align}

Let   $\bfB(\tau) = (\bfI_{n} + \tau^{-1}\bfD^{+})^{-1}$ and recall that 
$\widetilde{\bfX}=\bfP'\bfX$. Then, 
\begin{align}
    \bfW_{\tau} &= -\tau^{-2}\bfP\bfD^{+}\bfB(\tau)\bfP' \nonumber \\
    &\quad + \tau^{-2}\bfP\bfD^{+}\bfB(\tau)  \widetilde{\bfX}(\widetilde{\bfX}'\bfB(\tau) \widetilde{\bfX})^{-1} \widetilde{\bfX}'\bfB(\tau)\bfP'.
\end{align}

Let $\bfQ_{ij}(\tau) = \widetilde{\bfX}' \{\bfD^{+}\}^{i} \{\bfB(\tau)\}^{j} \widetilde{\bfX}$. Because both $\bfB(\tau)$ and $\bfD^{+}$ are diagonal, $\bfQ_{ij}(\tau)=\widetilde{\bfX}' \{\bfB(\tau)\}^{j} \{\bfD^{+}\}^{i} \widetilde{\bfX}$. Because of the linearity and cyclic properties of the trace, we obtain
\begin{align}
    \text{tr}(\bfW_{\tau}) &= -\tau^{-2}\text{tr}(\bfP\bfD^{+}\bfB(\tau)\bfP')\nonumber \\
    &\quad + \tau^{-2}\text{tr}(\bfP\bfD^{+}\bfB(\tau)  \widetilde{\bfX}(\widetilde{\bfX}'\bfB(\tau) \widetilde{\bfX})^{-1} \widetilde{\bfX}'\bfB(\tau)\bfP') \nonumber\\
 & = -\tau^{-2}\text{tr}(\bfD^{+}\bfB(\tau)) + \tau^{-2}\text{tr}(\{\bfQ_{01}(\tau)\}^{-1}\bfQ_{12}(\tau)).   \label{eqn:traceW}
\end{align}

In addition,
\begin{align}
    \bfW_{\tau}^2 &= \tau^{-4} \bfP \bfD^+ \bfB(\tau) \bfD^+ \bfB(\tau) \bfP' \nonumber\\
&\quad + \tau^{-4} \bfP \bfD^+ \bfB(\tau) \widetilde{\bfX}(\widetilde{\bfX}'\bfB(\tau) \widetilde{\bfX})^{-1} \widetilde{\bfX}'\bfB(\tau) \bfD^+ \bfB(\tau) \widetilde{\bfX} (\widetilde{\bfX}'\bfB(\tau) \widetilde{\bfX})^{-1} \widetilde{\bfX}'\bfB(\tau) \bfP' \nonumber\\
&\quad - 2 \tau^{-4} \bfP \bfD^+ \bfB(\tau) \bfD^+ \bfB(\tau) 
\widetilde{\bfX} (\widetilde{\bfX}'\bfB(\tau) \widetilde{\bfX})^{-1} \widetilde{\bfX}'\bfB(\tau) \bfP'. \nonumber
\end{align}

Using again the linearity and circularity properties of the trace, we obtain
\begin{align}
    \text{tr}(\bfW_{\tau}^2) &= \tau^{-4} \text{tr}\left(\bfD^+ \bfB(\tau) \bfD^+ \bfB(\tau)\right) \nonumber\\
&\quad + \tau^{-4} \text{tr}\left(\bfD^+ \bfB(\tau) \widetilde{\bfX}(\widetilde{\bfX}'\bfB(\tau) \widetilde{\bfX})^{-1} \widetilde{\bfX}'\bfB(\tau) \bfD^+ \bfB(\tau) \widetilde{\bfX} (\widetilde{\bfX}'\bfB(\tau) \widetilde{\bfX})^{-1} \widetilde{\bfX}'\bfB(\tau)\right) \nonumber\\
&\quad - 2 \tau^{-4} \text{tr}\left(\bfD^+ \bfB(\tau) \bfD^+ \bfB(\tau) 
\widetilde{\bfX} (\widetilde{\bfX}'\bfB(\tau) \widetilde{\bfX})^{-1} \widetilde{\bfX}'\bfB(\tau)\right) \nonumber\\
&= \tau^{-4} \left[\text{tr}\left((\bfD^+ \bfB(\tau))^2\right) 
    + \text{tr}\left( (\{\bfQ_{01}(\tau)\}^{-1} \bfQ_{12}(\tau))^2\right) 
  - 2\text{tr}\left( \{\bfQ_{01}(\tau)\}^{-1} \bfQ_{23}(\tau) \right)\right]. \label{eqn:traceW2}
\end{align}
Then, Equations~(\ref{eqn:traceW}) and (\ref{eqn:traceW2}) together with Theorem~\ref{Theo:Bergeretal2001} 
provide the reference prior for $(\alpha,\bfbeta,\sigma^2,\tau)$ given in Theorem \ref{theo:compact-form-reference-prior}.
\hfill $\square$

\section*{Appendix 3: Proof of Theorem \ref{theo:equivalence-of-priors}}

Let 
$$
T_1 = \sum_{j=1}^{n-p} \frac{\lambda_j}{\tau + \lambda_j} 
\mbox{ \ \ \ and \ \ \ }
T_2 = \sum_{j=1}^{n-p} \left(\frac{\lambda_j}{\tau + \lambda_j}\right)^2,
$$
where $\lambda_1, \dots, \lambda_{n-p}$ are the eigenvalues of the matrix ${\mathbf{M}^*}'\mathbf{H}^+\mathbf{M}^*$. Then, the marginal prior for $\tau$ given in Equation~(\ref{eqn:referenceprior-tau}) can be written as
\begin{equation}\label{eqn:referenceprior-tau-T1-T2}
\pi(\tau) \propto \frac{1}{\tau} \left[T_2 - \frac{1}{n-p}T_1^2\right]^{1/2}.
\end{equation}

Because $\mathbf{G} = \mathbf{M}\mathbf{L}\mathbf{M}'$ and $\mathbf{G}$ is an idempotent projection matrix with rank $n-p$, the $n \times n$ matrix $\mathbf{G}\mathbf{H}^+\mathbf{G}$ has $p$ eigenvalues equal to zero and $n-p$ non-zero eigenvalues equal to $\lambda_1, \dots, \lambda_{n-p}$. 
Let $\mathbf{C}_W = \widetilde{\bfG} \mathbf{D}^+ \widetilde{\bfG}$, where $\widetilde{\bfG} = \bfP'\bfG\bfP = \mathbf{I}_n - \widetilde{\bfX}(\widetilde{\bfX}'\widetilde{\bfX})^{-1}\widetilde{\bfX}'$. Because $\mathbf{C}_W$ is similar to $\mathbf{G}\mathbf{H}^+\mathbf{G}$, $\mathbf{C}_W$ also has $p$ eigenvalues equal to zero and $n-p$ non-zero eigenvalues equal to $\lambda_1, \dots, \lambda_{n-p}$. 
Let $\mathbf{E} = (\mathbf{D}^+)^{1/2}$. 
Because of the idempotence of the projection matrix $\widetilde{\bfG}$, the cyclic permutation of eigenvalues, and the fact that the null space of $\mathbf{E}$ is entirely contained in the null space of $\widetilde{\mathbf{G}}$, 
the matrix $\mathbf{E}\widetilde{\bfG}\mathbf{E}$ also has $p$ eigenvalues equal to zero and $n-p$ non-zero eigenvalues equal to $\lambda_1, \dots, \lambda_{n-p}$. 

Let $\mathbf{V} = (\tau \mathbf{I}_n + \mathbf{E}\widetilde{\bfG}\mathbf{E})^{-1}$. Thus, $\mathbf{V}$ has $n-p$ eigenvalues equal to $(\tau + \lambda_j)^{-1}$ and $p$ eigenvalues equal to $\tau^{-1}$. As a consequence, the trace of $\bfV$ is $\sum (\tau + \lambda_j)^{-1} + p\tau^{-1}$. 

We can expand $\mathbf{V}$ as
$$\mathbf{V} = (\tau \mathbf{I}_n + \mathbf{D}^+ - \mathbf{E}\widetilde{\bfX}(\widetilde{\bfX}'\widetilde{\bfX})^{-1}\widetilde{\bfX}'\mathbf{E})^{-1}.$$

Let $\mathbf{R} = \tau \mathbf{I}_n + \mathbf{D}^+$. 
By applying the Woodbury matrix identity to $\mathbf{V}$, we get
$$\mathbf{V} = \mathbf{R}^{-1} + \mathbf{R}^{-1}\mathbf{E}\widetilde{\bfX} (\widetilde{\bfX}'\widetilde{\bfX} - \widetilde{\bfX}'\mathbf{E}\mathbf{R}^{-1}\mathbf{E}\widetilde{\bfX})^{-1} \widetilde{\bfX}'\mathbf{E}\mathbf{R}^{-1}.$$

Recall that $\bfB(\tau) = (\mathbf{I}_n + \tau^{-1}\mathbf{D}^+)^{-1}$.
Thus, $\mathbf{R}^{-1} = \tau^{-1}\bfB(\tau)$. 
Hence, $\mathbf{E}\mathbf{R}^{-1}\mathbf{E} = \tau^{-1}\mathbf{D}^+\bfB(\tau)$.
In addition, recall that $\bfQ_{ij}(\tau)=\widetilde{\bfX}' \{\bfD^{+}\}^{i} \{\bfB(\tau)\}^{j}  \widetilde{\bfX}$. 
Then, the inverted term becomes
$$(\widetilde{\bfX}'\widetilde{\bfX} - \widetilde{\bfX}'\mathbf{E}\mathbf{R}^{-1}\mathbf{E}\widetilde{\bfX})^{-1}
=
\widetilde{\bfX}'(\mathbf{I}_n - \tau^{-1}\mathbf{D}^+\bfB(\tau))\widetilde{\bfX} = \widetilde{\bfX}'\bfB(\tau)\widetilde{\bfX} = \mathbf{Q}_{01}(\tau).$$

Substituting this back in the expression for $\mathbf{V}$, we get
\begin{equation}\label{eqn:master-equation-for-V}
\mathbf{V} = \tau^{-1}\bfB(\tau) + \tau^{-1}\bfB(\tau)\mathbf{E}\widetilde{\bfX} \{\mathbf{Q}_{01}(\tau)\}^{-1} \widetilde{\bfX}'\mathbf{E} \tau^{-1}\bfB(\tau).
\end{equation}

We can rewrite $T_1$ as
$$T_1 = \sum_{j=1}^{n-p} \frac{\lambda_j}{\tau + \lambda_j} = \sum_{j=1}^{n-p} \left(1 - \frac{\tau}{\tau + \lambda_j}\right) = (n-p) - \tau \sum_{j=1}^{n-p} (\tau + \lambda_j)^{-1}.$$

Because the trace of $\mathbf{V}$ is $\sum (\tau + \lambda_j)^{-1} + p\tau^{-1}$, 
$$T_1 = n - \tau\text{tr}(\mathbf{V})$$

Taking the trace of $\mathbf{V}$ in Equation~(\ref{eqn:master-equation-for-V}) and applying cyclic trace properties, we get
\begin{equation}\label{eqn:trace-V}
\tau\text{tr}(\mathbf{V}) = \text{tr}(\bfB(\tau)) + \tau^{-1}\text{tr}(\{\mathbf{Q}_{01}(\tau)\}^{-1}\mathbf{Q}_{12}(\tau)).
\end{equation}

Thus, $T_1 = \text{tr}(\mathbf{I}_n - \bfB(\tau)) - \tau^{-1}\text{tr}(\{\mathbf{Q}_{01}(\tau)\}^{-1}\mathbf{Q}_{12}(\tau))$.
Note $\mathbf{I}_n - \bfB(\tau) = \tau^{-1}\mathbf{D}^+\bfB(\tau)$, hence
\begin{equation}\label{eqn:T1-final}
T_1 = \tau^{-1} \left[ \text{tr}(\mathbf{D}^+\bfB(\tau)) - \text{tr}(\{\mathbf{Q}_{01}(\tau)\}^{-1}\mathbf{Q}_{12}(\tau)) \right].
\end{equation}

In addition, we can rewrite $T_2$ as
\begin{equation}\label{eqn:T2}
T_2 = \sum_{j=1}^{n-p} \left(\frac{\lambda_j}{\tau + \lambda_j}\right)^2 
= \sum_{j=1}^{n-p} \left(1 - \frac{\tau}{\tau + \lambda_j}\right)^2
= n - 2\tau\text{tr}(\mathbf{V}) + \tau^2\text{tr}(\mathbf{V}^2),
\end{equation}
since the trace of $\bfV$ is $\sum (\tau + \lambda_j)^{-1} + p\tau^{-1}$ and the trace of  $\bfV^2$ is $\sum (\tau + \lambda_j)^{-2} + p\tau^{-2}$.

Now, square Equation (\ref{eqn:master-equation-for-V}) for $\mathbf{V}$,  take the trace, and multiply by $\tau^2$ to obtain
\begin{equation}\label{eqn:trace-V2}
\tau^2\text{tr}(\mathbf{V}^2) = \text{tr}(\bfB(\tau)^2) + 2\tau^{-1}\text{tr}(\{\mathbf{Q}_{01}(\tau)\}^{-1}\mathbf{Q}_{13}(\tau)) + \tau^{-2}\text{tr}\left( (\{\mathbf{Q}_{01}(\tau)\}^{-1}\mathbf{Q}_{12}(\tau))^2 \right).
\end{equation}

Now, in Equation~(\ref{eqn:T2}), substitute the expressions for $\tau\text{tr}(\mathbf{V})$ and $\tau^2\text{tr}(\mathbf{V}^2)$ from Equations~(\ref{eqn:trace-V}) and (\ref{eqn:trace-V2}) to obtain
\begin{eqnarray*}
T_2 & = & \text{tr}(\mathbf{I}_n - 2\bfB(\tau) + \bfB(\tau)^2) + \tau^{-2}\text{tr}((\{\mathbf{Q}_{01}(\tau)\}^{-1}\mathbf{Q}_{12}(\tau))^2) \\
& & - 2\tau^{-1}\text{tr}\left( \{\mathbf{Q}_{01}(\tau)\}^{-1}(\mathbf{Q}_{12}(\tau) - \mathbf{Q}_{13}(\tau)) \right).
\end{eqnarray*}

Finally, $\mathbf{I}_n - 2\bfB(\tau) + \bfB(\tau)^2 = (\mathbf{I}_n - \bfB(\tau))^2 = (\tau^{-1}\mathbf{D}^+\bfB(\tau))^2$. In addition, $\mathbf{Q}_{12}(\tau) - \mathbf{Q}_{13}(\tau) = \widetilde{\bfX}'\mathbf{D}^+\mathbf{M}^2(\mathbf{I}_n - \mathbf{M})\widetilde{\bfX} = \widetilde{\bfX}'\mathbf{D}^+\mathbf{M}^2(\tau^{-1}\mathbf{D}^+\mathbf{M})\widetilde{\bfX} = \tau^{-1}\mathbf{Q}_{23}(\tau)$. Therefore,
\begin{equation}\label{eqn:T2-final}
T_2 = \tau^{-2}\left\{\text{tr}((\mathbf{D}^+\bfB(\tau))^2) + \text{tr}((\{\mathbf{Q}_{01}(\tau)\}^{-1}\mathbf{Q}_{12}(\tau))^2) - 2\text{tr}\left( \{\mathbf{Q}_{01}(\tau)\}^{-1}\mathbf{Q}_{23}(\tau) \right)\right\}.
\end{equation}

Substituting in Equation~(\ref{eqn:referenceprior-tau-T1-T2}) the expressions for $T_1$ and $T_2$ from Equations~(\ref{eqn:T1-final}) and (\ref{eqn:T2-final}) proves the theorem.
\hfill $\square$

\end{Appendix}

\baselineskip=14pt \vskip 4mm\noindent

\bibliographystyle{chicago}
\bibliography{ref}

@article{Ferreira2021,
title = {Fast and scalable computations for {G}aussian hierarchical models with intrinsic conditional autoregressive spatial random effects},
journal = {Computational Statistics and Data Analysis},
volume = {162},
pages = {107264},
year = {2021},
issn = {0167-9473},
doi = {https://doi.org/10.1016/j.csda.2021.107264},
url = {https://www.sciencedirect.com/science/article/pii/S0167947321000980},
author = {Marco A. R. Ferreira and Erica M. Porter and Christopher T. Franck}
}

@article{lee2013,
  title={{CARBayes}: An {R} package for {B}ayesian spatial modeling with conditional autoregressive priors},
  author={Lee, Duncan},
  journal={Journal of Statistical Software},
  volume={55},
  number={13},
  pages={1--24},
  year={2013}
}

@inproceedings{best:etal:1999,
  title={Bayesian models for spatially correlated disease and exposure data},
  author={Best, N G and Arnold, R A and Thomas, Andrew and Waller, L A and Conlon, E M},
  booktitle={Bayesian Statistics 6: Proceedings of the Sixth Valencia International Meeting},
  volume={6},
  pages={131},
  year={1999},
  organization={Oxford University Press}
}

@article{wata:2010,
  title={Asymptotic equivalence of {B}ayes cross validation and widely applicable information criterion in singular learning theory},
  author={Watanabe, Sumio},
  journal={Journal of Machine Learning Research},
  volume={11},
  pages = {3571--3594},
  year={2010}
}

@article{cele:etal:2006,
  title={Deviance information criteria for missing data models},
  author={Celeux, Gilles and Forbes, Florence and Robert, Christian P and Titterington, D Mike},
  journal={Bayesian Analysis},
  volume={1},
  number={4},
  pages={651--673},
  year={2006},
  publisher={International Society for Bayesian Analysis}
}

@article{ferr:2019,
author={Ferreira, M. A. R.},
title={The Limiting Distribution of the {G}ibbs Sampler for the Intrinsic Conditional Autoregressive Model},
journal={Brazilian Journal of Probability and Statistics},
volume={33},
pages={734--744},
year={2019}
}

@article{besag1991,
  title={Bayesian image restoration, with two applications in spatial statistics},
  author={Besag, Julian and York, Jeremy and Molli{\'e}, Annie},
  journal={Annals of the Institute of Statistical Mathematics},
  volume={43},
  number={1},
  pages={1--20},
  year={1991},
  publisher={Springer}
}

@article{ferr:deol:2007,
  title={Bayesian reference analysis for {G}aussian {M}arkov random fields},
  author={Ferreira, Marco {A} {R} and De Oliveira, Victor},
  journal={Journal of Multivariate Analysis},
  volume={98},
  number={4},
  pages={789--812},
  year={2007},
  publisher={Elsevier}
}

@article{besag1974,
  title={Spatial interaction and the statistical analysis of lattice systems},
  author={Besag, Julian},
  journal={Journal of the Royal Statistical Society -- Series B},
  pages={192--236},
  year={1974},
  publisher={JSTOR}
}

@article{reich2006,
  title={Effects of residual smoothing on the posterior of the fixed effects in disease-mapping models},
  author={Reich, Brian J and Hodges, James S and Zadnik, Vesna},
  journal={Biometrics},
  volume={62},
  number={4},
  pages={1197--1206},
  year={2006},
  publisher={Wiley Online Library}
}

@article{Jin2007,
author = {Jin, Xiaoping and Banerjee, Sudipto and Carlin, Bradley P.},
title = {Order-free co-regionalized areal data models with application to multiple-disease mapping},
journal = {Journal of the Royal Statistical Society: Series B (Statistical Methodology)},
volume = {69},
number = {5},
pages = {817-838},
doi = {https://doi.org/10.1111/j.1467-9868.2007.00612.x},
url = {https://rss.onlinelibrary.wiley.com/doi/abs/10.1111/j.1467-9868.2007.00612.x},
eprint = {https://rss.onlinelibrary.wiley.com/doi/pdf/10.1111/j.1467-9868.2007.00612.x},
year = {2007}
}

@article{Lee2011,
title = "A comparison of conditional autoregressive models used in {B}ayesian disease mapping",
journal = "Spatial and Spatio-temporal Epidemiology",
volume = "2",
number = "2",
pages = "79 - 89",
year = "2011",
issn = "1877-5845",
doi = "https://doi.org/10.1016/j.sste.2011.03.001",
url = "http://www.sciencedirect.com/science/article/pii/S1877584511000049",
author = "Duncan Lee"
}

@article{Goicoa2018,
author = {Goicoa, T and Adin, A. and Ugarte, M. D. and Hodges, J. S.},
year = {2018},
pages = {749-770},
title = {In spatio-temporal disease mapping models, identifiability constraints affect {PQL} and {INLA} results},
volume = {32},
journal = {Stochastic Environmental Research and Risk Assessment},
doi = {https://doi.org/10.1007/s00477-017-1405-0}
}

@article{liu2016,
  title={Pre-surgical {fMRI} data analysis using a spatially adaptive conditionally autoregressive model},
  author={Liu, Zhuqing and Berrocal, Veronica J and Bartsch, Andreas J and Johnson, Timothy D and others},
  journal={Bayesian Analysis},
  volume={11},
  issue={2},
  pages={599--625},
  year={2016},
  publisher={International Society for Bayesian Analysis}
}

@article{mercer2015,
  title={Space--time smoothing of complex survey data: Small area estimation for child mortality},
  author={Mercer, Laina D and Wakefield, Jon and Pantazis, Athena and Lutambi, Angelina M and Masanja, Honorati and Clark, Samuel and others},
  journal={The Annals of Applied Statistics},
  volume={9},
  number={4},
  pages={1889--1905},
  year={2015},
  publisher={Institute of Mathematical Statistics}
}

@incollection{lyu2024spatial,
  title={Spatial Modeling of Imaging and Electrophysiological Data},
  author={Lyu, Rongke and Guindani, Michele and Vannucci, Marina},
  booktitle={Statistical Methods in Epilepsy},
  editor={Chiang, Sharon and Rao, Vikram R. and Vannucci, Marina},
  chapter={10},
  year={2024},
  publisher={Chapman and Hall/CRC},
  address={Boca Raton, FL},
  doi={10.1201/9781003254515-10}
}

@book{kolaczyk2020statistical,
  title={Statistical Analysis of Network Data with R},
  author={Kolaczyk, Eric D. and Cs{\'a}rdi, G{\'a}bor},
  edition={2nd},
  year={2020},
  publisher={Springer}
}

@article{VerHoef2018,
	doi = {10.1002/ecm.1283},
	url = {https://doi.org/10.1002/ecm.1283},
	year = 2018,
	publisher = {Wiley},
	volume = {88},
	number = {1},
	pages = {36--59},
	author = {Jay M. {Ver Hoef} and Erin E. Peterson and Mevin B. Hooten and Ephraim M. Hanks and Marie-Jos{\`{e}}e Fortin},
	title = {Spatial autoregressive models for statistical inference from ecological data},
	journal = {Ecological Monographs}
}

@book{magn:neud:1999,
	Address = {Chichester},
	Author = {J. R. Magnus and H. Neudecker},
	Edition = {Revised},
	Publisher = {Wiley},
	Title = {Matrix Differential Calculus with Applications in Statistics and Econometrics},
	Year = {1999}}

@Manual{port:etal:2025,
    title = {ref.ICAR: Objective Bayes Intrinsic Conditional Autoregressive Model for
Areal Data},
    author = {Erica M. Porter and Matthew J. Keefe and Christopher T. Franck and Marco A.R. Ferreira},
    year = {2025},
    note = {R package version 2.0.2},
    url = {https://CRAN.R-project.org/package=ref.ICAR},
    doi = {10.32614/CRAN.package.ref.ICAR},
  }

@article{Porter2024,
  author    = {Porter, Erica M. and Franck, Christopher T. and Ferreira, Marco A. R.},
  title     = {Objective {Bayesian} Model Selection for Spatial Hierarchical Models with Intrinsic Conditional Autoregressive Priors},
  journal   = {Bayesian Analysis},
  year      = {2024},
  volume    = {19},
  number    = {4},
  pages     = {985--1011},
  doi       = {10.1214/23-BA1375}
}

@article{scott2010,
author = "Scott, James G. and Berger, James O.",
doi = "10.1214/10-AOS792",
fjournal = "The Annals of Statistics",
journal = "Annals of Statistics",
number = "5",
pages = "2587--2619",
publisher = "The Institute of Mathematical Statistics",
title = "Bayes and empirical-{B}ayes multiplicity adjustment in the variable-selection problem",
url = "https://doi.org/10.1214/10-AOS792",
volume = "38",
year = "2010"
}

@article{deoliveira2007,
  title={Objective {B}ayesian analysis of spatial data with measurement error},
  author={De Oliveira, Victor},
  journal={Canadian Journal of Statistics},
  volume={35},
  number={2},
  pages={283--301},
  year={2007},
  publisher={Wiley Online Library}
}

@Manual{r:2025,
    title = {R: A Language and Environment for Statistical Computing},
    author = {{R Core Team}},
    organization = {R Foundation for Statistical Computing},
    address = {Vienna, Austria},
    year = {2025},
    url = {https://www.R-project.org/},
  }

@article{deol:2011,
  title={Maximum likelihood and restricted maximum likelihood estimation for a class of {G}aussian {M}arkov random fields},
  author={De Oliveira, Victor and Ferreira,  Marco {A} {R}},
  journal={Metrika},
  volume={74},
  number={2},
  pages={167--183},
  year={2011},
  publisher={Springer}
}

@article{Liang2008,
author = {Feng Liang and Rui Paulo and German Molina and Merlise A Clyde and Jim O Berger},
title = {Mixtures of g Priors for {Bayesian} Variable Selection},
journal = {Journal of the American Statistical Association},
volume = {103},
number = {481},
pages = {410-423},
year  = {2008},
publisher = {Taylor & Francis},
doi = {10.1198/016214507000001337},
URL = {https://doi.org/10.1198/016214507000001337},
eprint = {https://doi.org/10.1198/016214507000001337}
}

@Article{berg:oliv:sans:2001,
  author = 	 {James O. Berger and Victor de Oliveira and Bruno Sans\'o},
  title = 	 {Objective {B}ayesian Analysis of Spatially Correlated Data},
  journal = 	 {Journal of the American Statistical Association},
  year = 	 {2001},
  volume = 	 {96},
  number = 	 {456},
  pages = 	 {1361--1374}
}

@article{Barbieri2004,
  author  = {Barbieri, Maria Maddalena and Berger, James O.},
  title   = {Optimal predictive model selection},
  journal = {Annals of Statistics},
  Volume =  {32}, 
  Pages = {870--897},
  year    = {2004}
}

@article{keef:ferr:fran:2019,
  title={Objective {B}ayesian analysis for {G}aussian hierarchical models with intrinsic conditional autoregressive priors},
  author={Keefe, Matthew J and Ferreira, Marco {A} {R} and Franck, Christopher T},
  journal={Bayesian Analysis},
  volume={14},
  pages={181--209},
  year={2019}
}

@article{keef:ferr:fran:2018,
  title={On the formal specification of sum-zero constrained intrinsic conditional autoregressive models},
  author={Keefe, Matthew J and Ferreira, Marco {A} {R} and Franck, Christopher T},
  journal={Spatial Statistics},
  volume={24},
  pages={54--65},
  year={2018},
}

@article{ohagan1995,
  title={Fractional {B}ayes factors for model comparison},
  author={O'Hagan, Anthony},
  journal={Journal of the Royal Statistical Society. Series B (Methodological)},
  pages={99--138},
  year={1995},
  publisher={JSTOR}
}

@article{spiegelhalter2002,
  title={Bayesian measures of model complexity and fit},
  author={Spiegelhalter, David J and Best, Nicola G and Carlin, Bradley P and Van Der Linde, Angelika},
  journal={Journal of the Royal Statistical Society: Series B (Statistical Methodology)},
  volume={64},
  number={4},
  pages={583--639},
  year={2002},
  publisher={Wiley Online Library}
}

\end{document}